\title{$\mathcal{O}(k)$-robust spanners in one dimension}
\author[1]{Kevin Buchin}
\author[1]{Tim Hulshof}
\author[1]{D\'aniel Ol\'ah}
\affil[1]{Eindhoven University of Technology, Eindhoven, The Netherlands}
\date{}
\theoremstyle{plain}
\newtheorem{theorem}{Theorem}
\begin{document}

\maketitle

\begin{abstract}

A geometric $t$-spanner on a set of points in Euclidean space is a graph containing for every pair of points a path of length at most $t$ times the Euclidean distance between the points.
Informally, a spanner is $\mathcal{O}(k)$-robust if deleting $k$ vertices only harms $\mathcal{O}(k)$ other vertices. 
We show that on any one-dimensional set of $n$ points, for any $\varepsilon>0$, there exists an $\mathcal{O}(k)$-robust $1$-spanner with $\mathcal{O}(n^{1+\varepsilon})$ edges. 
Previously it was only known that $\mathcal{O}(k)$-robust spanners with $\mathcal{O}(n^2)$ edges exists and that there are point sets on which any $\mathcal{O}(k)$-robust spanner has $\Omega(n\log{n})$ edges. 

\end{abstract}

\section{Introduction}

\textit{Geometric networks} are graphs whose vertices are points in the $d$-dimensional Euclidean space $\mathbb{R}^d \;(d\geq 1)$ and whose edges are weighted by the Euclidean distance between their endpoints. A desirable property of geometric networks is to have relatively short paths between any pair of points. We say that a geometric network $G=(V,E)$ is a $t$\textit{-spanner} of $V'\subseteq V$ for some $t\geq 1$ if 
\begin{equation}
    d_G(x,y) \leq t \cdot d(x,y) \quad \quad \forall x,y \in V'\;,
\end{equation}
where $d_G(x,y)$ denotes the length of the shortest path between $x$ and $y$ in $G$ and $d(x,y)$ is the Euclidean distance between $x$ and $y$. If $V'$ is not specified, then we mean $V'=V$. Spanners have been studied extensively and there are many algorithms to construct spanners with various properties. Narasimhan and Smid~\cite{spanner_book} give a comprehensive overview of spanners.

An interesting property of spanners is their resistance against failures. Assume that some points in the network fail, that is, these points with all their edges are removed from the graph. The first natural question is whether the remaining part of the network is still a $t$-spanner or not, perhaps for a larger value of $t$. It is obvious that any point can be isolated by deleting its neighbors, so the remaining graph cannot in all cases remain a spanner. 

Bose et al.~\cite{RGS} introduced the following notion of robustness. 
Let $G=(V,E)$ be a (geometric) $t$-spanner for some $t\geq1$ and let $f:\mathbb{N}\rightarrow \mathbb{N}$ be an arbitrary function. Assume that a set of points $F\subset V$ fail. Then we say that $G$ is an $f(k)$-\textit{robust} $t$-\textit{spanner} if for any $F$, there exists a set $F^*\supseteq F$ with $|F^*|\leq f(|F|)$,
such that the subgraph induced by $V\backslash F$ is a $t$-spanner of $V\backslash F^*$. This means that it is allowed to ignore further points in the sense that we do not require the spanner property for them, but they are not deleted from $G$.
An alternative and stronger definition could be that the subgraph induced by $V\backslash F^*$ is a $t$-spanner. That is, we delete not only the failed vertices from $G$ but all the vertices that are in $F^*$.

Bose et al.~\cite{RGS} proved various bounds on the size (number of edges) of $f(k)$-robust spanners in one and higher dimensions for general functions $f(k)$. In particular, they prove that an $\mathcal{O}(k)$-robust spanner on $n$ vertices may needs $\Omega(n\log{n})$ edges, even for one-dimensional point sets. They show no upper bounds for the size of $\mathcal{O}(k)$-robust spanners below the trivial $\mathcal{O}(n^2)$. As open problem they pose obtaining tighter bounds for the size of $\mathcal{O}(k)$-robust spanners, even in the simple setting where the input is the set $\{1, 2, \ldots, n\} \subset \mathbb{R}^1$. In this paper, we address this problem for general one-dimensional point sets.

The definition of $f(k)$-robustness is not the only way to obtain failure resistant networks. An alternative concept is \textit{$k$-fault tolerance}~\cite{cz-ftgs-04,lns-iafts-02}. A graph $G$ is a $k$-fault tolerant $t$-spanner if for any set of points $F$, where $|F|\leq k$, the graph $G$ after the removal of the points of $F$ is a $t$-spanner. Fault tolerance is a suitable definition if the number of failures is bounded by some constant. In many cases, however, the number of failures is not known in advance and might be large. 
Clearly, the degree of each vertex must be at least $k+1$ to achieve $k$-fault tolerance, therefore the size of the graph is large if $k$ is big.
The definition of $f(k)$-robustness allows us to avoid this, at the cost of possibly having to ignore a small number of additional vertices.

In Section~\ref{non-it} we present a simple construction for which we show that it is an $\mathcal{O}(k)$-robust $1$-spanner on $n$ vertices with $\mathcal{O}(n^\frac{3}{2})$ edges for any one-dimensional point set. In Section~\ref{it-section} we further improve the upper bound by generalizing the construction given in Section~\ref{non-it}. We prove that for any one-dimensional point set there exists an $\mathcal{O}(k)$-robust $1$-spanner of size $\mathcal{O}(n^{1+\varepsilon})$.

\section{A simple construction}
\label{non-it}

Let $V=\{x_1,x_2,\dots,x_n\}$ be an arbitrary point set with $x_i\in \mathbb{R}$ for $1\leq i\leq n$ and $x_i<x_{i+1}$ for $1\leq i\leq n-1$. For the sake of simplicity, assume, that $n=(2m)^2$, where $m$ is a positive integer. We construct a graph $G=(V,E)$ as follows. Let $\mathcal{C}_i=\{ x_{(i-1)m +1}, x_{(i-1)m+2}, \dots, x_{(i+1)m} \}$ for $1\leq i \leq 4m-1$. We call $\mathcal{C}_i$ the $i^{th}$ cluster. There are $4m-1$ clusters and each of them has exactly $2m$ points. Note that adjacent clusters are half-overlapping, that is $|\mathcal{C}_i \cap \mathcal{C}_{i+1}|=m$. 
We define half-clusters as the sets obtained by splitting each cluster in the middle. Therefore, the number of half-clusters we have is exactly $4m$ and each of them contains exactly $m$ points. 
Let $\mathcal{H}_i^L$ and $\mathcal{H}_i^R$ denote the left and right half of $\mathcal{C}_i$, respectively. Also note that $\mathcal{H}_i^L=\mathcal{H}_{i-1}^R$.
The structure of the clusters is illustrated in Figure~\ref{clusters-nonit}.

\begin{figure}
\centering
\includegraphics[scale=1.2]{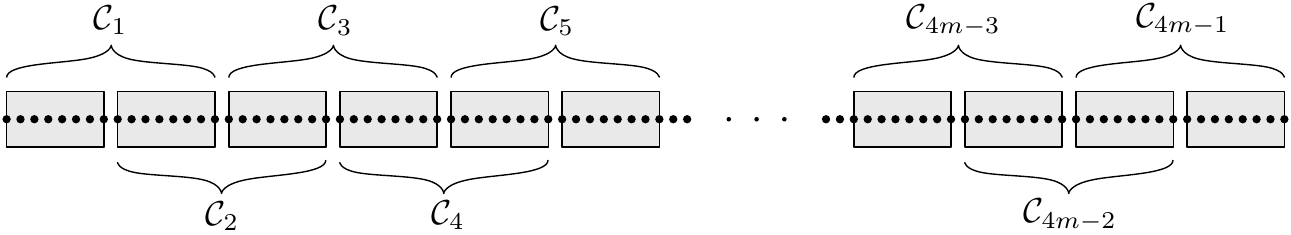}
\caption{The structure of the clusters of $G$. Grey rectangles indicate half-clusters. Each half-cluster contain exactly $m$ vertices.}
\label{clusters-nonit}
\end{figure}

We add two types of edges to the graph. First, we form cliques in each of the $4m-1$ clusters. Since the size of the clusters is $2m$, we add at most $(4m-1)\mathcal{O}(m^2)=\mathcal{O}(m^3)$ edges. Second, for any pair of half-clusters we add an arbitrary complete matching between the two half-clusters. The number of edges we add is again $\mathcal{O}(m^3)$, since each matching consists of exactly $m$ edges and there are $\genfrac{(}{)}{0pt}{1}{4m}{2}$ pairs of half-clusters.



\begin{theorem}
\label{nonit-proof}
For any $1$-dimensional point set $V$ (with $|V|=n$), the graph $G$ constructed above is an $\mathcal{O}(k)$-robust $1$-spanner with $\mathcal{O}(n^\frac{3}{2})$ edges.
\end{theorem}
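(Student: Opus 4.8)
The plan is to verify the two claimed properties separately: the edge count, which is immediate, and $\mathcal{O}(k)$-robustness, which is the substance.

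For the edge count, we have $n = (2m)^2 = 4m^2$, so $m = \Theta(\sqrt{n})$ and the total number of edges is $\mathcal{O}(m^3) = \mathcal{O}(n^{3/2})$, as claimed. It also needs checking that $G$ is in fact a $1$-spanner: for any two points $x_i, x_j$ with $i < j$, they lie in a common cluster whenever $j - i < m$ (since clusters are contiguous blocks of $2m$ consecutive points, overlapping by $m$), and the clique on that cluster gives the edge $x_ix_j$ directly, so $d_G(x_i,x_j) = d(x_i,x_j)$. If $j - i \geq m$, then $x_i$ lies in some half-cluster $\mathcal{H}$ and $x_j$ in some half-cluster $\mathcal{H}'$ with $\mathcal{H} \neq \mathcal{H}'$ (they cannot share a half-cluster), and the matching between $\mathcal{H}$ and $\mathcal{H}'$ sends $x_i$ to some $x_\ell \in \mathcal{H}'$; then $d_G(x_i,x_j) \leq d(x_i,x_\ell) + d(x_\ell,x_j)$, and since $x_i$ lies to the left of every point of $\mathcal{H}'$ (as $\mathcal H'$ lies entirely to the right of $x_i$) all three points are in convex position on the line with $x_\ell$ between — wait, more carefully: both $x_\ell$ and $x_j$ are in $\mathcal H'$ hence within $x_i$'s right, so $d(x_i,x_\ell)+d(x_\ell,x_j) = d(x_i,x_j)$ when $x_\ell$ lies between $x_i$ and $x_j$; in general collinearity gives $d(x_i,x_\ell)+d(x_\ell,x_j)=d(x_i,x_j)$ precisely when $x_\ell \in [x_i,x_j]$, which may fail, so instead one routes $x_i \to x_j$ via the matching edge plus a clique edge inside $\mathcal H'$ from $x_\ell$ to $x_j$, and picks up the detour $2\,d(x_\ell, x_j) \le 2\,(\text{diam}\,\mathcal H')$ — this does not immediately give a $1$-spanner, so the correct routing must be: use the matching between the half-cluster containing $x_i$ and the half-cluster containing $x_j$, but choose the matching edge endpoint; since we may choose the matchings, fix for each ordered pair of half-clusters $(\mathcal H, \mathcal H')$ a matching, and observe that for the $1$-spanner property it suffices that some neighbor of $x_i$ in $\mathcal H'$ lies in $[x_i, x_j]$, which holds because $x_j$ itself is such a point when $x_i \notin \mathcal H'$ — here is the clean statement: if $x_i$ and $x_j$ are not in a common cluster, route $x_i \to (\text{match}) \to x_\ell \to (\text{clique in } \mathcal H') \to x_j$; this has length $d(x_i,x_\ell) + d(x_\ell, x_j)$. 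Since all points lie on a line and $x_\ell, x_j \in \mathcal H'$ which lies to one side of $x_i$, we get $d(x_i,x_\ell)+d(x_\ell,x_j) = d(x_i,x_j)$ iff $x_\ell \in [x_i,x_j]$; to guarantee this, note $x_i < \min \mathcal H' $ is false in general, so the genuinely robust choice is to take, among the half-clusters, the pair whose matching we route through to be the leftmost/rightmost appropriately. I will present this routing argument carefully, as it is a (minor) obstacle in its own right. [In the writeup I would simply say: $G$ is a $1$-spanner because consecutive points $x_i, x_{i+1}$ are always in a common cluster, hence adjacent in $G$, so $d_G(x_i,x_j) = \sum_{\ell=i}^{j-1} d(x_\ell, x_{\ell+1}) = d(x_i,x_j)$ by collinearity — this is the cleanest route and avoids the matching entirely for the spanner property.]

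The core of the proof is robustness. Let $F \subseteq V$ with $|F| = k$ be the failed set; I must produce $F^* \supseteq F$ with $|F^*| = \mathcal{O}(k)$ so that $G - F$ is a $1$-spanner of $V \setminus F^*$. Call a half-cluster \emph{damaged} if it contains at least one point of $F$, and let $D$ be the number of damaged half-clusters; clearly $D \leq k$. The natural choice is to let $F^*$ be $F$ together with all points lying in damaged half-clusters that also lie in a \emph{damaged cluster} — more precisely, I will set $F^* = \bigcup \{ \mathcal{C}_i : \mathcal{C}_i \text{ is damaged}\}$, i.e. we ignore every point whose cluster contains a failure. Since a cluster is the union of two half-clusters and each damaged half-cluster lies in at most two clusters, the number of damaged clusters is at most $2D \le 2k$, each of size $2m$, giving $|F^*| \le 4mk$ — which is $\mathcal{O}(k)$ only if $m = \mathcal{O}(1)$, so this crude bound is \textbf{not good enough} and this is exactly where the cleverness must go. The fix: we do not ignore a whole damaged cluster; instead, for a point $x$ to be in $F^*$ we should require that \emph{all} routes from $x$ are blocked. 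Observe that between any two surviving points $x_i, x_j$ that are in a common cluster $\mathcal C_\ell$, if $\mathcal C_\ell$ is undamaged then the clique edge survives; if $\mathcal C_\ell$ is damaged we instead need another common undamaged cluster, or a two-hop route through an undamaged half-cluster. The key combinatorial point is that a point $x$ belongs to exactly two clusters (a "left" one and a "right" one, except near the boundary), and to several half-clusters' matchings; $x$ is only truly stranded if essentially all of a neighborhood around it is damaged.

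Here is the argument I would actually run. Define $F^*$ to consist of $F$ together with every surviving point $x$ such that \emph{both} clusters containing $x$ are damaged. A point $x$ lies in at most two clusters, so "both clusters of $x$ damaged" means two specific clusters are damaged; since each cluster is damaged by some point of $F$ in it, and a cluster contains points of at most... — rather, count the other way: the number of damaged clusters is at most $2k$ (each of the $\le k$ failed points lies in $\le 2$ clusters), and a surviving point $x$ with both its clusters damaged lies in the $m$-point overlap of two adjacent damaged clusters $\mathcal C_\ell, \mathcal C_{\ell+1}$; the pairs of \emph{adjacent} damaged clusters number at most $2k - 1 = \mathcal{O}(k)$, and each such adjacent pair contributes an overlap of $m$ points. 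That still gives $\mathcal{O}(mk)$. So the honest resolution must exploit that the matchings between half-clusters give \emph{many} independent two-hop routes, so that a point is only added to $F^*$ when $\Omega(m)$ of its would-be relay points have failed — charging $m$ distinct failures to each ignored point, which bounds $|F^* \setminus F|$ by $\mathcal{O}(k)$. Concretely: for surviving $x_i, x_j$ not in a common undamaged cluster, consider the half-cluster $\mathcal H$ containing $x_i$ closest to $x_j$ and the half-cluster $\mathcal H'$ containing $x_j$ closest to $x_i$; the matching between $\mathcal H$ and $\mathcal H'$ gives a relay $x_\ell$, and if $x_\ell$ survives and the clique edges on $x_\ell$'s side survive we are done — but there are $\binom{4m}{2}$ matchings and for a fixed $x_i$ there are many half-clusters $\mathcal H'$ to choose, hence many relays, and $x_i$ is stranded only if all relays in all these matchings are dead, which forces $\Omega(m)$ failures "near" $x_i$, and these near-failures can be charged to $x_i$ with bounded multiplicity. \textbf{This charging argument is the main obstacle} and the place where the factor-of-$m$ loss must be recovered; the rest (edge count, $1$-spanner property, bookkeeping of which clusters/half-clusters are damaged) is routine. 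I would structure the final proof as: (1) edge count; (2) $1$-spanner via consecutive-point routing; (3) definition of $F^*$; (4) the charging lemma showing $|F^*| = \mathcal{O}(k)$; (5) verification that $G - F$ is a $1$-spanner of $V \setminus F^*$, by exhibiting for each pair in $V \setminus F^*$ a surviving shortest path, using that at least one common cluster or one two-hop matching route through surviving vertices exists by the very definition of $F^*$.
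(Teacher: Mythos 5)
There is a genuine gap: you correctly diagnose where the difficulty lies (a point should only be ignored when $\Omega(m)$ failures can be charged to it), but you never actually define $F^*$ in a way that achieves this, and you explicitly defer the ``charging lemma'' that is the crux of the theorem. Your two concrete attempts at defining $F^*$ (all points in damaged clusters; points both of whose clusters are damaged) are, as you note yourself, off by a factor of $m$, and the proposal ends with a promissory note rather than a construction. The missing ingredient is a threshold rule: starting from $F^*=F$, for every half-cluster $\mathcal{H}_i^L$ with $|\mathcal{H}_i^L\cap F|\geq \frac{m}{2}$, add the two clusters $\mathcal{C}_{i-1}\cup\mathcal{C}_i$ (at most $3m$ points) to $F^*$. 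Since the half-clusters partition $V$, each such trigger charges at least $m/2$ distinct failures, so $|F^*|=\mathcal{O}(|F|)$ immediately --- no delicate multiplicity bookkeeping is needed.

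Moreover, this particular rule does double duty, which your sketch of step (5) does not secure. If $x\in\mathcal{H}_i^L$ and $y\in\mathcal{H}_j^R$ survive in $V\setminus F^*$ and their clusters are disjoint, then by the contrapositive of the rule $|\mathcal{H}_i^R\cap F|<\frac{m}{2}$ and $|\mathcal{H}_j^L\cap F|<\frac{m}{2}$, so by pigeonhole on the $m$-edge matching between $\mathcal{H}_i^R$ and $\mathcal{H}_j^L$ some matching edge $\{x',y'\}$ has both endpoints alive; the path $x\to x'\to y'\to y$ (clique, matching, clique) has length exactly $d(x,y)$ because $x<x'<y'<y$. Your alternative idea of routing through ``many relays in many half-clusters near $x_i$'' would additionally have to guarantee that the relay lies between $x$ and $y$ --- otherwise you only get a $t$-spanner with $t>1$, an issue you ran into already in your discussion of the spanner property --- whereas with the threshold rule the relays automatically sit in the two inner half-clusters $\mathcal{H}_i^R$ and $\mathcal{H}_j^L$, so exactness is free. (The overlapping-cluster case is handled the same way by a surviving vertex in the common half-cluster, and the extension to $n$ not of the form $(2m)^2$, which you do not address, is a routine padding of the cluster structure.)
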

\begin{proof}
Let's assume for now that $n=(2m)^2$; later we will show how to adapt the construction to arbitrary $n$. Clearly, the size of $G$ is $\mathcal{O}(m^3)=\mathcal{O}(n^\frac{3}{2})$. Firstly, it remains to construct the set $F^*$ for any set of failed points $F$ such that $|F^*|\leq\mathcal{O}(|F|)$. Secondly, we have to prove that $d_{G'}(x,y)=d(x,y)$ holds for any two points  $x,y\in V\backslash F^*$, where $G'$ is the subgraph induced by $V\backslash F$, i.e., $G'$ is a $1$-spanner for $V\backslash F^*$ on $V\backslash F$.

Consider the case that a set of points $F$ fail. 
To start, set $F^*=F$. Then for each half-cluster $\mathcal{H}_i^L$ ($2\leq i\leq 4m-1$), if $|\mathcal{H}_i^L \cap F| \geq \frac{m}{2}$, we update $F^*$ by adding the clusters $\mathcal{C}_{i-1}$ and $\mathcal{C}_i$ to $F^*$, 
see Figure~\ref{nonit-delrule}. 
Formally,
\begin{equation}
F^*:= F \cup \bigcup_{i: |\mathcal{H}_i^L \cap F|\geq\frac{m}{2}} \left(\mathcal{C}_{i-1} \cup \mathcal{C}_i \right).
\end{equation}
It is clear that $|F^*|\leq 6\cdot |F|$. 

\begin{figure}[b]
\centering
\includegraphics[scale=1.2]{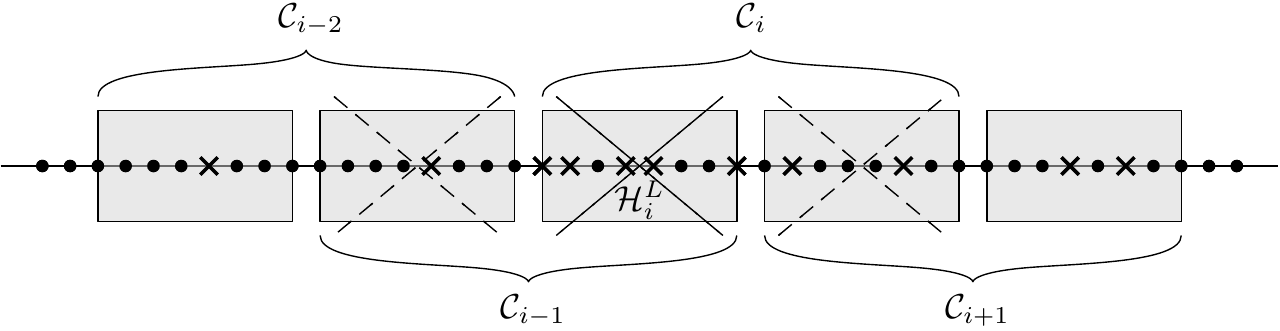}
\caption{Small crosses indicate failed nodes. Since $|\mathcal{H}_i^L\cap F|\geq \frac{m}{2}$, we add all points of cluster $\mathcal{C}_i$ and $\mathcal{C}_{i-1}$ to the set $F^*$. That is for all $x \in \mathcal{C}_i \cup \mathcal{C}_{i-1}$ let $x\in F^*$.}
\label{nonit-delrule}
\end{figure}

Fix $x,y\in V\backslash F^*$ and assume that $x<y$.
There are three separate cases. First, if they are in the same cluster, then $\{x,y\}\in E$ and the claim holds. 
Second, if they are not in the same, but overlapping clusters, then in the intersection of the clusters there is a vertex $z\notin F$, which shares an edge with both $x$ and $y$.
The third case is when $x\in \mathcal{H}_i^L$ and $y\in \mathcal{H}_{j}^R$, with $i<j+1$. Then, we know that $|\mathcal{H}_i^R \cap F| < \frac{m}{2}$, otherwise $x \in F^*$. Similarly, we know that $|\mathcal{H}_j^L \cap F| < \frac{m}{2}$, otherwise $y \in F^*$ holds.
Therefore, by the pigeonhole principle there is a vertex $x'\in\mathcal{H}_i^R \backslash F$ and a vertex $y'\in\mathcal{H}_j^L\backslash F$ for which $\{x',y'\}\in E$. It is clear that $\{x,x'\}\in E$ and $\{y,y'\}\in E$, since within a cluster all nodes are connected (Figure~\ref{nonit-1-path}). Therefore, the length of this path between $x$ and $y$ is exactly $d(x,x')+d(x',y')+d(y',y)=d(x,y)$, because $x<x'<y'<y$ by construction. 


\begin{figure}
\centering
\includegraphics[scale=1]{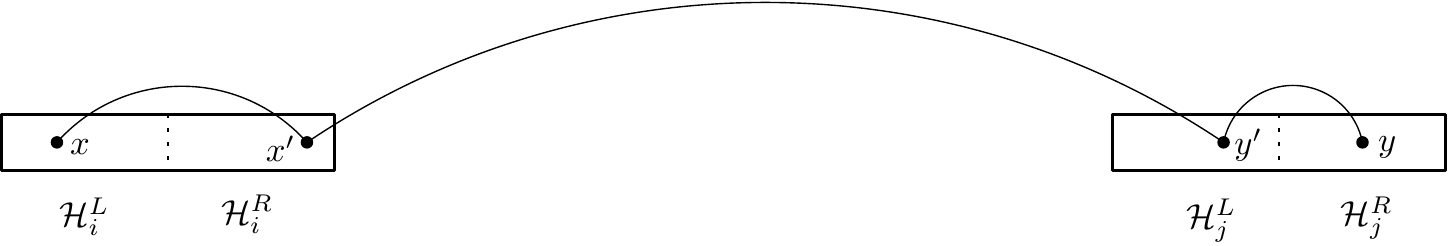}
\caption{The path of length $d(x,y)$ between the vertices $x$ and $y$.}
\label{nonit-1-path}
\end{figure}

Now we show the extension to arbitrary $n$. Let's assume that $(2m)^2<n< (2m+2)^2$. Split $V$ into two parts such that $V=V_1 \cup V_2$, where $V_1:=\{ x_1,x_2,\dots ,x_{(2m)^2} \}$ and $V_2:=\{ x_{(2m)^2+1},x_{(2m)^2+2},\dots ,x_n \}$. Build the same graph on the set $V_1$ 
and then extend the same structure to $V_2$. Since $|V_2|<8m+4$, at most $8$ full clusters can be added, which are half-overlapping and have size $2m$. If necessary, we add one more cluster that has a smaller right half that containing the points which were not added to any of the clusters yet. The size of the graphs is still $\mathcal{O}(n^\frac{3}{2})$ and construction of $F^*$ and the proof of the existence of paths without any detour remains the same.

\end{proof}

\section{Iterated construction}
\label{it-section}

Let $V$ be the same as before, and assume that 
$n=(2m)^{\ell+1}$, where $\ell$ is an additional parameter, which determines the number of layers in the construction of the spanner $G_\ell=(V,E)$. In each layer there are clusters with the same size, namely the clusters in layer $i$ have size $n^{\frac{i}{\ell+1}}=(2m)^i$ for $1\leq i \leq \ell$. Let $\mathcal{C}_{i,j}$ denote the $j^{th}$ cluster in layer $i$, then the clusters of the $i^{th}$ layer are
\begin{equation}
\mathcal{C}_{i,j}:=\Big\{ x_{(j-1)\frac{(2m)^i}{2}+1}, x_{(j-1)\frac{(2m)^i}{2}+2}, \dots, x_{(j+1)\frac{(2m)^i}{2}} \Big\} 
\end{equation}
for any $1\leq j\leq 2\cdot (2m)^{\ell+1-i}-1$ and $1\leq i\leq \ell$.
In each layer the clusters are half-overlapping, so that $|\mathcal{C}_{i,j}\cap \mathcal{C}_{i,j+1}|=\frac{(2m)^i}{2}$. Again, we define half-clusters by splitting each cluster into two parts of equal size. Let $\mathcal{H}_{i,j}^L$ and $\mathcal{H}_{i,j}^R$ denote the left and right half of $\mathcal{C}_{i,j}$, respectively. Note that a cluster in layer $i$ contains $2\cdot n^\frac{1}{\ell+1}=4m$ half-clusters from layer $i-1$. The case $\ell=1$ corresponds to the simple construction in Section~\ref{non-it}.


We define the edgeset of $G_\ell$ in the following way. For any cluster $\mathcal{C}_{1,j}$ in the first (lowest) layer, form a clique on its vertices. This adds $\mathcal{O}(n^\frac{2}{\ell+1}\cdot n^\frac{\ell+1-1}{\ell+1})= \mathcal{O}(n^\frac{\ell+2}{\ell+1})$ edges. In layer $i$ ($2\leq i\leq \ell$), for any cluster $\mathcal{C}_{i,j}$, form a complete matching between any pair of half-clusters from layer $i-1$ that are both contained in $\mathcal{C}_{i,j}$, see Figure~\ref{it-edges}. The number of edges that are added is $\mathcal{O}(n^\frac{\ell+1-i}{\ell+1}\cdot n^\frac{2}{\ell+1}\cdot n^\frac{i-1}{\ell+1})=\mathcal{O}(n^\frac{\ell+2}{\ell+1})$ in each layer. Finally we add a complete matching between any pair of half-clusters in the top level. This adds another $\mathcal{O}(n^\frac{\ell+2}{\ell+1})$ edges. Thus, $G_\ell$ has $\mathcal{O}(\ell\cdot n^\frac{\ell+2}{\ell+1})$ edges.

\begin{figure}
\centering
\includegraphics[scale=0.9]{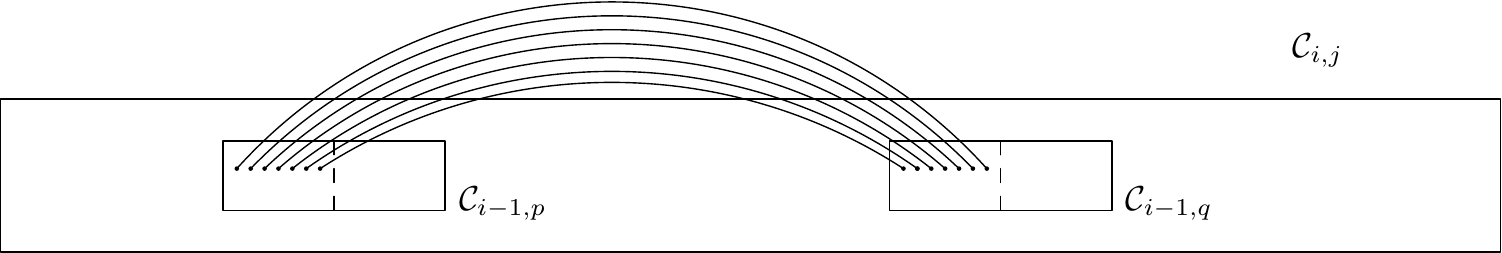}
\caption{An example of a matching between the half-clusters $\mathcal{H}_{i-1,p}^L$ and $\mathcal{H}_{i-1,q}^L$. We add an arbitrary complete matching between them because they are both contained in the same cluster $\mathcal{C}_{i,j}$.}
\label{it-edges}
\end{figure}
\begin{theorem}
For any $\varepsilon>0$ and any $1$-dimensional point set $V$ (with $|V|=n$), the graph $G_\ell$ constructed above is an $\mathcal{O}(k)$-robust $1$-spanner with $\mathcal{O}(n^{1+\varepsilon})$ edges for $\ell\geq\frac{1-\varepsilon}{\varepsilon}$.
\end{theorem}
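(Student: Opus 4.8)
The plan is to generalize the three-case argument from Theorem~\ref{nonit-proof} to the $\ell$-layer construction, peeling off one layer at a time. First I would fix the failure rule for $F^*$: starting from $F^* = F$, process the layers from bottom to top, and in each layer $i$, for every half-cluster $\mathcal{H}$ of layer $i$, if $|\mathcal{H}\cap F^*| \geq \frac{1}{2}|\mathcal{H}|$ then add to $F^*$ the (at most two) layer-$i$ clusters containing $\mathcal{H}$. Here it is essential to threshold against the already-updated $F^*$ (not the raw $F$), so that a half-cluster that has become ``mostly bad'' because its sub-half-clusters were flagged at a lower layer propagates upward. I would then prove by induction on $i$ that the total number of vertices added while processing layers $1$ through $i$ is at most $c_i |F|$ for an explicit constant $c_i$: each flagged half-cluster in layer $i$ has at least half of its $(2m)^i/2$ points already in $F^*$, and since layer-$i$ half-clusters overlap only in the limited pattern of the construction, a charging argument bounds the number of flags in layer $i$ by $\mathcal{O}(|F^*_{\text{after layer }i-1}| / (2m)^i)$, each flag costing $\mathcal{O}((2m)^i)$ new vertices. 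Summing the geometric-type recursion gives $|F^*| \leq C|F|$ with $C$ independent of $\ell$ and $n$ (this is the crux of the counting and the place I'd be most careful — one needs the constant not to blow up with $\ell$).

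Next I would establish the spanner property: for $x,y \in V\setminus F^*$ with $x<y$, I claim $d_{G'}(x,y) = d(x,y)$ where $G'$ is induced on $V\setminus F$. Let $i$ be the smallest layer index such that $x$ and $y$ lie in a common cluster $\mathcal{C}_{i,j}$ (such $i$ exists, at worst $i=\ell$ using the top-level matchings, with the convention that the top level behaves like an extra layer). If $i=1$, then $x,y$ are in a common clique and $\{x,y\}\in E$. If $i\geq 2$, then within $\mathcal{C}_{i,j}$ the points $x$ and $y$ either fall in the same layer-$(i-1)$ cluster (contradicting minimality of $i$, unless that cluster is the trivial base case, which is handled) or in two distinct layer-$(i-1)$ half-clusters $\mathcal{H}$ and $\mathcal{H}'$ with $\mathcal{H}$ to the left of $\mathcal{H}'$; since $x\notin F^*$, the half-cluster of layer $i-1$ on $x$'s side that got matched is not flagged, so fewer than half its points are in $F^*$ and hence (as $F\subseteq F^*$) fewer than half are in $F$, and symmetrically for $y$. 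By pigeonhole there is a surviving matched pair $x'\in \mathcal{H}\setminus F$, $y'\in\mathcal{H}'\setminus F$ with $\{x',y'\}\in E$. Then I would recurse: $x$ and $x'$ both lie in $\mathcal{H}$, a layer-$(i-1)$ structure, and similarly $y,y'$, reducing to shorter paths in lower layers. The monotonicity $x < x' < y' < y$ along the whole recursively-built path guarantees the lengths telescope exactly to $d(x,y)$, so the path has no detour.

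For the recursion in the previous paragraph to close, I would prove the following lemma by induction on the layer index $r$: if $u,v \in V\setminus F^*$ both lie in a common layer-$r$ half-cluster, then there is a monotone path from $u$ to $v$ in $G'$ of length exactly $d(u,v)$. The base case $r=1$ is the clique; the inductive step is precisely the pigeonhole-plus-matching argument above applied one layer down, using that $u,v\notin F^*$ forces the relevant layer-$(r-1)$ half-clusters to be unflagged. The only subtlety is the boundary/degenerate clusters at the ends of each layer and, as in the proof of Theorem~\ref{nonit-proof}, the adaptation to $n$ not exactly of the form $(2m)^{\ell+1}$, which I would handle by the same splitting trick (carve off a prefix of size a perfect power and append $\mathcal{O}(1)$ extra possibly-truncated clusters at each layer, absorbing the $\mathcal{O}(1)$ into constants).

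Finally, the edge count was already shown in the construction to be $\mathcal{O}(\ell \cdot n^{(\ell+2)/(\ell+1)}) = \mathcal{O}(\ell \cdot n^{1 + 1/(\ell+1)})$. Choosing $\ell \geq \frac{1-\varepsilon}{\varepsilon}$ gives $\frac{1}{\ell+1} \leq \varepsilon$, so the exponent is at most $1+\varepsilon$; the factor $\ell$ is a constant once $\varepsilon$ is fixed, hence the bound $\mathcal{O}(n^{1+\varepsilon})$ follows. I expect the main obstacle to be making the charging argument for $|F^*| = \mathcal{O}(|F|)$ completely rigorous with a constant that is uniform in $\ell$: one must argue that flags created at higher layers, though they each cost many vertices, are correspondingly rare because they each swallow a proportional number of already-bad vertices, and that the overlap structure of clusters within a layer does not let a single bad vertex be charged too many times.
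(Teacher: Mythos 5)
Your counting step and your parameter choice match the paper in spirit, but the core of the spanner argument has a genuine gap: the routing you propose does not yield \emph{exact} paths. You take the matching between the layer-$(i-1)$ half-cluster $\mathcal{H}$ that \emph{contains} $x$ and the half-cluster $\mathcal{H}'$ that \emph{contains} $y$, and pick an arbitrary surviving matched pair $x'\in\mathcal{H}$, $y'\in\mathcal{H}'$. Nothing forces $x<x'$ or $y'<y$: if, say, $x$ is the rightmost survivor of $\mathcal{H}$ and its own matching partner failed, every surviving matched pair has $x'<x$, and the path $x\to\cdots\to x'\to y'\to\cdots\to y$ has length $d(x,x')+d(x',y')+d(y',y)>d(x,y)$. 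A backtrack of any size destroys the $1$-spanner property, so the asserted monotonicity ``$x<x'<y'<y$'' is exactly the point that needs proof and can fail under your choice. The paper avoids this by routing through half-clusters strictly between $x$ and $y$: with $x\in\mathcal{H}_{i,p}^L$ and $y\in\mathcal{H}_{i,q}^R$ it matches $\mathcal{H}_{i,p}^R$ (entirely right of $x$) with $\mathcal{H}_{i,q}^L$ (entirely left of $y$); the price is a second case when $q=p+1$, where these coincide and the paper instead picks a layer-$(i-1)$ cluster $\mathcal{C}_{i-1,s}$ inside the shared half whose two halves are each less than half failed, and uses \emph{two} matching edges $x'\to v$ and $z\to y'$ through it. Your sketch has no analogue of this overlapping case, and once you repair the routing you will need one. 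A second, smaller mismatch: your inductive lemma assumes both endpoints lie outside $F^*$, but the pigeonhole only delivers $x',y'\notin F$, so the recursive call on $(x,x')$ is not justified as written; the paper selects the matched pair outside $F^*$, and closing that loop (or strengthening the induction hypothesis) requires a little care that your proposal does not supply.

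On the size of $F^*$, your goal of a constant independent of $\ell$ is both unsupported and unnecessary. The charging you describe gives a per-layer bound of the form $|F_i|\leq(1+c)\,|F_{i-1}|$, which compounds to $(1+c)^{\ell}|F|$, not an $\ell$-free constant; and the compounding is real, since vertices added by flags at layer $i-1$ can themselves trigger flags at layer $i$, so one should not expect uniformity in $\ell$. The paper simply observes $|F_i|\leq 6|F_{i-1}|$, hence $|F^*|\leq 6^{\ell}|F|$, which is $\mathcal{O}(k)$ because $\ell$ is a constant determined by $\varepsilon$ alone. Your edge count and the choice $\ell\geq\frac{1-\varepsilon}{\varepsilon}$ agree with the paper, as does the plan for extending to general $n$.
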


\begin{proof}
We follow the structure of the proof of Theorem~\ref{nonit-proof}, that is, first we show that the graph constructed above is an $\mathcal{O}(k)$-robust $1$-spanner if $n=(2m)^{\ell+1}$ for some $m\in \mathbb{N}$, then we show the generalization to arbitrary $n$. 
Let us start with the construction of $F^*$. 
To start, set $F_0=F$. Then apply the following rule repeatedly for each layer from bottom to top. Assume that $i$ is the current layer. First, set $F_i = F_{i-1}$. Then, for each half-cluster $\mathcal{H}_{i,j}^L$ in layer $i$, if $|\mathcal{H}_{i,j}^L \cap F_{i-1}| \geq \frac{1}{2}|\mathcal{H}_{i,j}^L |$, we update $F_i$ by adding the clusters $\mathcal{C}_{i,j}$ and $\mathcal{C}_{i,j-1}$. Finally, let $F^*=F_\ell$.
Clearly, $|F_i|\leq 6\cdot|F_{i-1}|$.
Therefore, the size of $F^*$ is at most $6^\ell\cdot |F|$.

Fix $x,y\in V\backslash F^*$ and assume $x<y$. Let $\mathcal{C}_{i+1,j}$ be the smallest cluster that contains both $x$ and $y$. We use induction on the size of the smallest cluster that contains $x$ and $y$ to prove that a path of length $d(x,y)$ between $x$ and $y$ in the subgraph induced by $V\backslash F$ exists. Let $x\in \mathcal{H}_{i,p}^L$ and $y\in \mathcal{H}_{i,q}^R$. There are two cases that we distinguish. The first case is when $q>p+1$ holds, that is, the clusters $\mathcal{C}_{i,p}$ and $\mathcal{C}_{i,q}$ do not intersect. Then, there is a vertex $x'\in \mathcal{H}_{i,p}^R\backslash F^*$ and $y'\in \mathcal{H}_{i,q}^L \backslash F^*$ such that $\{x',y'\}\in E$. By induction there is a path of length $d(x,x')$ between $x$ and $x'$ and a path of length $d(y',y)$ between $y'$ and $y$. Then using the edge $\{x',y'\}$ we get a path of length $d(x,y)$ between $x$ and $y$, see Figure~\ref{1-path-ind}. 
\begin{figure}
\centering
\includegraphics[scale=0.8]{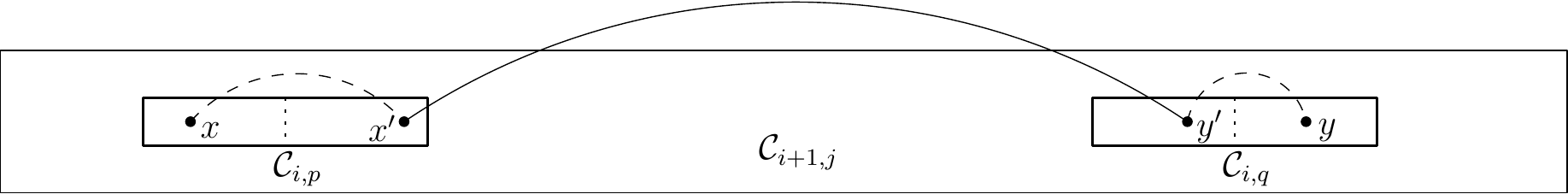}
\caption{The structure of the path between $x$ and $y$ that has length $d(x,y)$ when $q>p+1$ holds.}
\label{1-path-ind}
\end{figure}

The second case is when $q=p+1$ holds, that is, the clusters $\mathcal{C}_{i,p}$ and $\mathcal{C}_{i,q}$ intersect. As it is illustrated in Figure~\ref{1-path-ind2}, there are clusters $\mathcal{C}_{i-1,r}$ and $\mathcal{C}_{i-1,t}$ one layer below that contain the vertices $x$ in the left half and $y$ in the right half, respectively. 
Since $x,y\notin F^*$, therefore $\mathcal{H}_{i,p+1}^L \nsubseteq F^*$ and there exists a cluster $\mathcal{C}_{i-1,s} \subset \mathcal{H}_{i,p+1}^L$ such that $|\mathcal{H}_{i-1,s}^L \cap F^*| < \frac{1}{2}|\mathcal{H}_{i-1,s}^L|$ and $|\mathcal{H}_{i-1,s}^R \cap F^*|<\frac{1}{2}|\mathcal{H}_{i-1,s}^R|$ holds. Therefore, by the pigeonhole principle, we can choose the points $x'\in \mathcal{H}_{i-1,r}^R \backslash F^*$ and $v\in \mathcal{H}_{i-1,s}^L \backslash F^*$ such that $\{ x',v \}\in E$. Similarly, we can choose $z\in \mathcal{H}_{i-1,s}^R \backslash F^*$ and $y'\in \mathcal{H}_{i-1,t}^L \backslash F^*$ such that $\{z,y'\}\in E$. Again, by induction there are paths with length of the Euclidean distance between $x,x'$ and $v,z$ and $y,y'$. Note that $x'=v$ or $y'=z$ can occur, but these cases are simpler. The induction terminates, since at each step the level is decreased at least by one and at the bottom layer all points are connected within one cluster.
\begin{figure}
\centering
\includegraphics[scale=1.2]{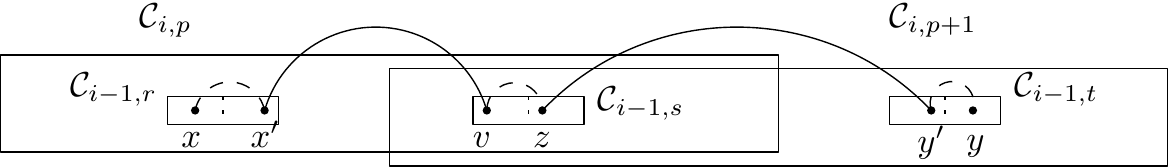}
\caption{The structure of the path between $x$ and $y$ that has length $d(x,y)$ when $q=p+1$ holds.}
\label{1-path-ind2}
\end{figure}

Now to show the generalization to arbitrary $n$. Assume that $(2m)^{\ell+1}<n<(2m+2)^{\ell+1}$. Repeat the same construction on the first $(2m)^{\ell+1}$ points as before. Then extend each layer continuously with half-overlapping clusters as long as they fit. If there are some points left at the end, add one more cluster that has a smaller sized right half. Form matchings between the proper half-clusters as before and form cliques within each cluster in the lowest layer. Regarding the failures, the construction of $F^*$ remains the same. This extension preserves the desired properties and the proof to find a path of length $d(x,y)$ between $x$ and $y$ works the same.

\end{proof}

\noindent
\textit{Acknowledgements.} The work in this paper is supported by the Netherlands Organisation for Scientific Research (NWO) through Gravitation-grant NETWORKS-024.002.003.



\bibliographystyle{plain}
\bibliography{references}
\end{document}